\newcommand{\norm}[1]{\left\lVert#1\right\rVert}
\newcommand{\DOF}{DoF \,}
\newcommand{\ie}{\textit{i.e.,\,}}
\newcommand{\eg}{\textit{e.g.,\,}}
\newtheorem{prop}{Proposition} 
\newtheorem{reeemark}{Remark}
\begin{document}

\title[Detection of separatrices and chaotic seas based on orbit amplitudes]{Detection of separatrices and chaotic seas based on orbit amplitudes
}

\author[J.\,Daquin]{
J\'er\^ome Daquin         
}
\address{Department of Mathematics (naXys), $61$ Avenue de Bruxelles, $5000$, Namur, Belgium}
\email{jerome.daquin@unamur.be}

\author[C.\,Charalambous]{Carolina Charalambous}
\address{Department of Mathematics (naXys), $61$ Avenue de Bruxelles, $5000$, Namur, Belgium}
\email{carolina.charalambous@unamur.be}

\address{}

\date{\today}

\keywords{
Maximum eccentricity method; Stability maps; Dynamical indicator;
Mean-motion resonances;
Planetary systems
}
\date{\today}

\maketitle

\begin{abstract}
The Maximum Eccentricity Method (MEM, \cite{rDv04}) is a standard tool for the analysis of planetary  systems and their stability. 
The method amounts to estimating the maximal stretch of orbits over sampled domains of initial conditions. The present paper leverages on the MEM to introduce a sharp detector of separatrices and chaotic seas. After introducing the MEM analogue for nearly-integrable action-angle Hamiltonians, \ie 
diameters, we  use low-dimensional dynamical systems with multi-resonant modes and junctions, supporting chaotic motions, to recognise the drivers of the diameter metric.
Once this is appreciated, we present a second-derivative based index measuring the regularity of this application. This quantity turns to be a sensitive and robust indicator to detect separatrices, resonant webs and chaotic seas. We  discuss practical applications of this framework in the context of $N$-body simulations for the planetary case affected by mean-motion resonances, and demonstrate the ability of the index to distinguish minute structures of the phase space, otherwise undetected with the original MEM. 
\end{abstract}	

\tableofcontents

\section{Introduction}\label{sec:intro}
Among the 5,000 discovered extrasolar systems\footnote{Confer The Extrasolar Planets Encyclopaedia, \href{http://exoplanet.eu/catalog}{http://exoplanet.eu} \cite{jSc11}.}, 
there is a great diversity in multiplicities of planets and central stars, orbital architectures (spacing between planets, orbital parameters), or physical properties (masses, radii). 
Being able to understand the complex gravitational interactions of this kaleidoscope, such as the dynamics associated to resonant configurations, is a dynamical challenge that
help to retrace their histories since formation, and further constrain observational campaigns (see, \eg \cite{cGi12,sHa20,aCe21,jTe22,mSt22}).  
The Maximum Eccentricity Method (hereafter, MEM) introduced by \cite{rDv04} originated to probe the dynamical structure and largeness of stability regions of exoplanetary systems in the context of direct numerical integrations of the equations of motion. 
By denoting an orbital element set $\oe=(a,e,i,\Omega,\omega,M)$ where $a$ denotes the semi-major axis, $e$ the eccentricity, $i$ the inclination, $\Omega$ the longitude of the ascending node, $\omega$ the argument of perigee and $M$ the mean-anomaly, the MEM index assigns, for an admissible initial condition $\oe_{0}$, 
the maximal stretch of the orbital eccentricity over a finite time window $[0,t]$, $t \in \mathbb{R}_{+}$:
\begin{equation}\label{eq:Deltaecc}
\delta e(\oe_{0};t) 
= 
\max_{0 \le \tau \le t } e(\tau)- \min_{0 \le \tau \le t } e(\tau). 
\end{equation}
For short, and to follow conventional notations, we drop the initial datum $\oe_{0}$ and time $t$ from Eq.\,(\ref{eq:Deltaecc}) to simply note $\delta e$. 
The time window is problem dependent, and is chosen in accordance with the dynamical or physical timescales. 
The MEM  has been used in a variety of contexts and across distinct scales, ranging from the study of near-Earth space artificial satellite dynamics \cite{iGk16,emAl16,emAl18,jDa18,aRo19}
up to the resonant structure of satellites
around giant planets \cite{cCh22}, 
including the stability of planetary systems \cite{rDv04,vKo20,rAl21}.
Depending on the dynamical context, it might  be more appropriate to 
substitute Eq.\,(\ref{eq:Deltaecc}) by its analogue 
\begin{equation}\label{eq:Deltasma}
\delta a(\oe_{0};t) 
= 
\max_{0 \le \tau \le t } a(\tau)
- 
\min_{0 \le \tau \le t } a(\tau).  
\end{equation}
In using Eq.\,(\ref{eq:Deltasma}) over Eq.\,(\ref{eq:Deltaecc}), as \eg in the context of mean-motion resonances \cite{tGa16} or tesseral effects with the Earth \cite{cCo17}, a dynamical understanding of the problem is injected into the computed quantity (in both cases,  the semi-major axis is the variable primarily excited by the perturbation).
The $\delta e$ or $\delta a$ indices are most typically used in the context of dynamical maps where the scalars are colour-coded accordingly to their magnitudes, and computed over 2-dimensional grid of initial conditions (\ie a $\delta e$ or $\delta a$ heatmap is computed by  ``freezing'' $4$ variables in the initial datum $\oe_{0}$). 
As a matter of fact, the computation of the amplitudes (diameters) is a useful tool in delineating and visualising  dynamical structures  of multidimensional problems. 
This paper extends the MEM capabilities and reports an existing connection with chaos identification. 
In fact, we present a simple method to unravel sharply dynamical structures and chaotic seas from
the knowledge of orbits only and 
MEM like computations, yielding to a robust and sensitive non-variational chaos indicator.\\

We would like to stress that the reported method does not constitute a ``new branch'' of chaos indicator (among which it is customary to distinguish between frequential like methods (\eg \cite{jLa93}), or varational methods such as the the Fast Lyapunov Indicator (FLI) and variations \cite{cFr97,mFo02,rBa05}, the
MEGNO \cite{pCi03}, the SALI \cite{cSk01} or GALI \cite{cSk07} indices, to name but a few),  but is instead thoroughly  related to the  Lagrangian Descriptor (LD) framework in which the $M$-function assigning euclidean length\footnote{The LD framework does not rely exclusively  on the euclidean norm.  In this respect, $p$-norm like LDs \cite{cLo15}, LDs based on the actions in the context of Hamiltonians framework \cite{vGa22}, or time-free geometrical LDs for integrable problems \cite{rPO21} have been considered in several instances.} to orbits plays a central role \cite{jaMa09,cMe10,aMa13}. 
In fact, paraphrasing \cite{aMa13},  it is clear from Eqs.\,(\ref{eq:Deltaecc}) and (\ref{eq:Deltasma}) that $\delta e$ and $\delta a$ are actually positive quantities that accumulate along the trajectory, one of the cornerstone property behind LDs. 
LDs have been precious allies over the years for gaining dynamical understandings in a variety of contexts and range of fields, such as the detection of Lagrangian coherent structures in geophysical and oceanic flows (see \eg \cite{cMe10,jCu19I,jCu19II}), but also and especially in the field of reaction dynamics in theoretical chemistry, allowing to recover stable and unstable manifolds of normally hyperbolic invariant manifolds (NHIM) in a non-perturbative approach (see \eg \cite{gCr15,mFe17,aJu17,yNa21}).  
Recently, two non-variational chaos indicators have been proposed in concert from the $M$-function by \cite{jDa22-physD} and \cite{mHi22}. The present paper follows closely the steps of \cite{jDa22-physD} by recognising the resemblance of the $M$-function with MEM like quantities used in orbital settings.\\

The rest of the paper is organised as follows:
\begin{itemize}
	\item In Sect.\,\ref{sec:D}, we 
	introduce the MEM general counterpart in the setting of $n$ degree-of-freedom (DoF) action-angle Hamiltonian. The  introduced quantity corresponds to a diameter, $D$, measuring the maximal amplitude of the actions over a finite time window.   
	The diameter metric  is then studied over slices of initial conditions on paradigmatic models of resonances, supporting possibly chaotic motions. These include the integrable pendulum model, the modulated pendulum with a thin chaotic layer, and a two-waves Hamiltonian model where resonances overlap. 
	The analysis leads to a better understanding  of the dynamical drivers of the diameter. 
	\item In Sect.\,\ref{sec:DeltaD}, we introduce a scalar quantity encoding the regularity of the diameter metric. The index, denoted $\norm{\Delta D}$, is based on the evaluation of the second derivatives of $D$. This quantity is a sensitive and robust scalar able to detect sharply hyperbolic trajectories and  multi-resonant modes. It is proposed as a new non-variational chaos indicator. 
	\item In Sect.\,\ref{sec:PlanetaryApplication}, in the context of $N$-body simulations, we 
	apply our framework to the $2$ and $3$ bodies planetary problem. In this case, the diameter metric reduces to the original MEM like quantity given in Eq.\,(\ref{eq:Deltasma}), $\delta a$. We demonstrate the benefits of considering $\norm{\Delta \delta a}$ over $\delta a$ to restore the separatrices of the $2$ planet problem, and resonant templates of mean-motions configurations in the $3$ planet case. The $\norm{\Delta \delta a}$ offers details of the phase space otherwise unseen with the $\delta a$ quantity. 
\end{itemize}
We close the paper by summarising our results and main contributions. 

\section{Action-diameter applied to resonant models}\label{sec:D}
In order to extend the MEM  to the more general setting of $n$-\DOF Hamiltonian written in action-angle variables  $(I,\phi)$ in $\mathbb{R}^{n} \times \mathbb{T}^{n}$,
we find convenient to introduce the \textit{diameter} of an orbit. This follows the direction and  terminology employed by  \cite{nGu17} in the context of nearly-integrable maps (see complimentary results on the formalism developed in Appendix \ref{app:Discrete}). Let us denote by $\mathcal{D}$ a subset of $\mathbb{R}^{n} \times \mathbb{T}^{n}$. 
For a given initial condition $(I_{0},\phi_{0})$ in $\mathcal{D}$, we define the diameter metric as\footnote{We shall not consider here blowing-up trajectories in finite-time. We thus assume to deal with bounded observables, leading to a  finite diameter $D$. In the $N$-body simulations of Sect.\,\ref{sec:PlanetaryApplication}, escapes in finite time are not excluded. Nevertheless, we bypass this problem by using conditional exit loops during the numerical treatment of the equation of motions. This prevents the issue to happen. Equivalently, it amounts in some cases to consider the final time variable as a function of the initial datum, $t=t(I_{0},\phi_{0})$.} 

\begin{align}\label{eq:Dfunction}
D: \, \mathcal{D} \times \mathbb{R} &\to [0,+\infty), \notag \\
(I_{0},\phi_{0};t) &\mapsto 
D(I_{0},\phi_{0};t),
\end{align}
with
\begin{equation}\label{eq:Dvalue}
D(I_{0},\phi_{0};t)=
\norm{\big(\delta I_{1}(I_{0},\phi_{0};t),\cdots,\delta I_{n}(I_{0},\phi_{0};t)\big)}_{\infty},
\end{equation}
where each element of the set 
$\{\delta I_{j}(I_{0},\phi_{0};t)\}_{j=1}^{n}$,  represents the amplitude\footnote{Interestingly enough, we shall underline that similar definitions based on amplitudes in certain direction found applications in fluid mechanics for characterising mixing properties, see \cite{rMu14}.}  
of the considered action\footnote{
	Note that Eqs.\,(\ref{eq:Deltaecc}) and (\ref{eq:Deltasma}) are not based directly on actions, yet, the metrical orbital elements $(a,e,i)$ are simple function of proper actions such as the Delaunay elements $(L,G,H)$, for example. Thus, large variations in $a$ are equivalent to large variations in $L=\sqrt{\mu a}$, and, in the secular approximation where $L$ is a first integral, large variations in $e$ are equivalent to large variations in $G=L\sqrt{1-e^{2}}$.   
}: 
\begin{equation}\label{eq:DD}
\delta I_{j}(I_{0},\phi_{0};t) = 
\max_{0 \le \tau \le t} I_{j}(I_{0},\phi_{0},\tau)-
\min_{0 \le \tau \le t} I_{j}(I_{0},\phi_{0},\tau), \quad j=1,\cdots,n.
\end{equation}
To avoid notation burden, we shall not distinguish between the $D$ metric or its value (the diameter, noted $D$) provided an initial datum. It is clear from Eqs.\,(\ref{eq:Dvalue}) and (\ref{eq:DD}) that  the $D$ metric is an increasing function of time, accumulating a positive scalar along the orbit's history.  
We analyse now how the $D$ metric behaves on low-dimensional resonant models.\\

\subsection{Models and numerical settings}
We consider the following three archetypal resonant models  

\begin{align}
\left\{
\begin{aligned}
&\mathcal{H}(I,\phi)=\frac{I^{2}}{2}-\cos \phi, \quad (I,\phi) \in \mathcal{C}, \, \mathcal{C}=\mathbb{R} \times [0,2\pi], \notag \\ 
&\mathcal{J}(I,\phi,t)=\frac{I^{2}}{2}- \big(1 + \alpha \cos \epsilon t\big)\cos \phi, \quad (I,\phi,t) \in \mathcal{C} \times \mathbb{R}, \notag \\
&\mathcal{K}(I,\phi,t) = \frac{I^{2}}{2} - (
\alpha_{1} \cos(\phi-t) + \alpha_{2}\cos(\phi+t) ), \quad (I,\phi,t) \in \mathcal{C} \times \mathbb{R},
\end{aligned}
\right.
\end{align}
where $\epsilon \ll 1$, $\alpha < 1$, $\alpha_{1}$, $\alpha_{2}$ 
are real positive  parameters. 
For each model, described further in the subsequent, we compute values of $D$ over chosen slices of initial conditions. 
In order to evaluate Eq.\,(\ref{eq:Dfunction}), we have fixed the time window to $\mathcal{T}=[0,500]$.
The non-autonomous models have been converted into $2$-\DOF autonomous models by extending the dimension of the phase space through the introduction of canonical variables $(J,\tau)$, where $\dot{\tau}$ has a trivial dynamics. 
As the ``dummy'' action $J$ has no dynamical relevance, the diameter $D$ is computed only by monitoring the action $I$ (\ie we compute a one dimensional diameter corresponding to the amplitude of the action $I$). 
Although the pendulum model $\mathcal{H}$ is $1$-\DOF with a phase space easily described by the level-set method (see further discussions on analytical properties of $D$ in Appendix \ref{app:AnalProof}), all the corresponding flows have been numerically approximated using numerical solvers. 


\subsection{Application to the integrable pendulum}\label{subsec:pend}
The well-known phase space of the $1$-\DOF pendulum model $\mathcal{H}$ on ${\mathcal D} \times [-\pi,\pi]$, ${\mathcal D} \subset \mathbb{R}$, obtained by the level-set method is shown in the left frame of Fig.\,\ref{fig:fig1}. 
The phase space contains the three fixed points, the stable equilibrium at the origin $(\phi,I)=(0,0)$ and the fixed points $(\phi,I)=(\pi,0)=(-\pi,0)$. 
The level curve associated to the energy $E=1$ of the hyperbolic equilibrium, \ie the separatrix (shown as a red curve), divides orbits of the phase space with distinct qualitative behaviours. 
Within the cat-eye domain ($E < 1$), the phase space is foliated by librational curves, whilst outside the cat-eye region ($E > 1$), circulational tori enclose the cylinder.  
The half-width $\delta I$ of the resonance, \ie the distance  between $I=0$ and the apex of the separatrix, satisfies 
\begin{equation}
\mathcal{H}(0,\delta I)=1.
\end{equation}
Solving this last equation for $\delta I$, one find $\delta I=2$ leading, by symmetry, to the full resonant width  $\Delta I = 2 \delta I=4$. 
The right companion panel of Fig.\,\ref{fig:fig1} shows the landscape
of the $D$ metric computed as a function of the initial action $I$ for the fixed angle $\phi=0$ (corresponding to the vertical dashed blue line in the phase space of the pendulum). 
For $I \ge 0$, the $D$ function grows linearly within the librational domain up to the apex of the separatrix (corresponding to the action $I= 2$). 
The crossing of the separatrix is materialised by a sudden loss of continuity in the $D$ function. At this point, $D$ is only left-continue and reaches its maximal value, $D(2)= \Delta I =4$. 
For $I > 2$, the graph decreases monotonically.  A similar picture is obtained by symmetry for $I < 0$. 
From the numerical estimation of $D$, one infer that $D$ is not differentiable at $I=0$, and $I=\pm 2$ (where the function is not even continuous). 
Those properties of the $D$ metric are proven analytically in Appendix \ref{app:AnalProof}.  

\begin{figure}
	\centering
	\includegraphics[width=0.9\linewidth]{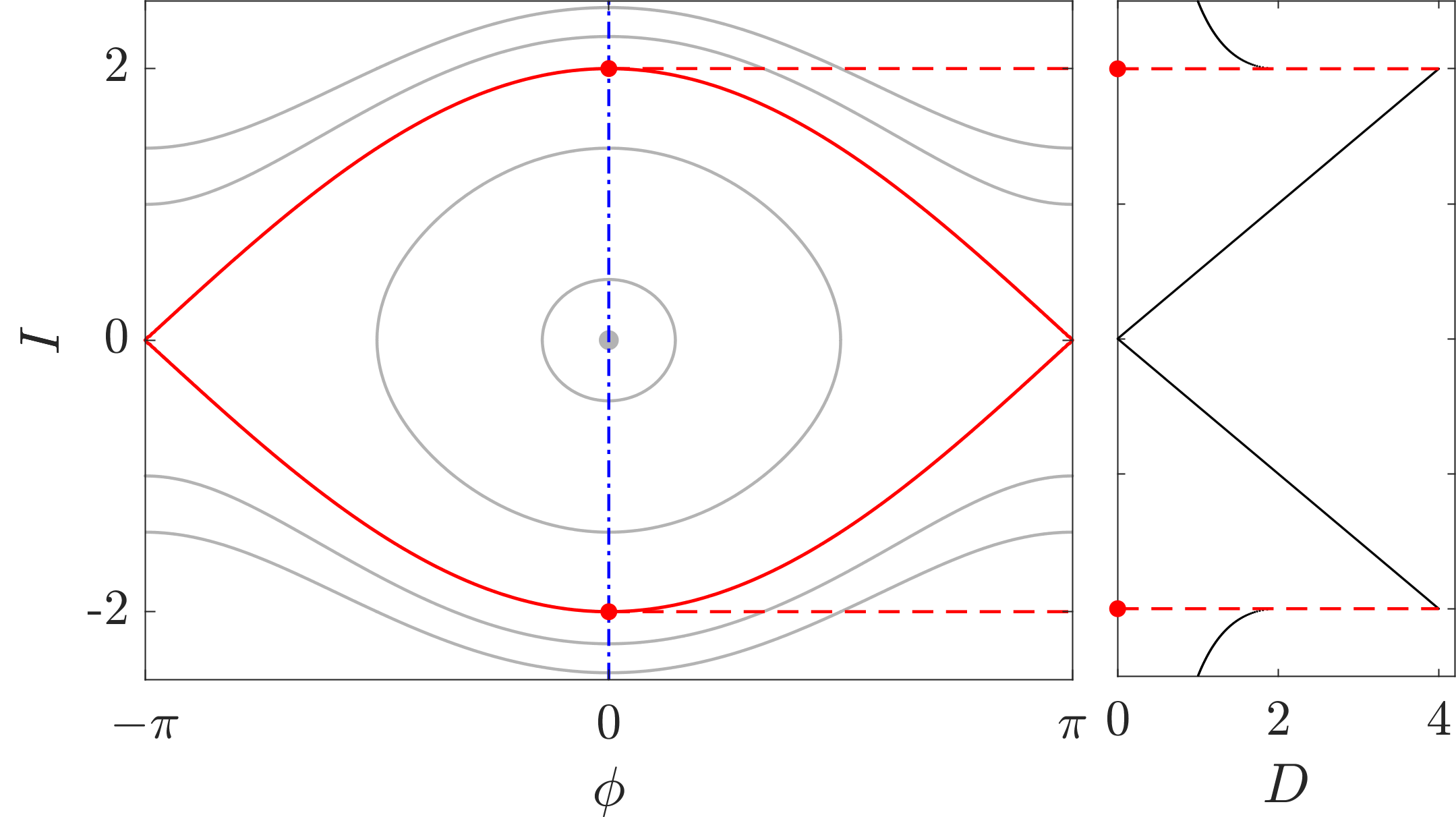}
	\caption{
		(Left) Phase space of the pendulum model $\mathcal{H}$ with the cat-eye separatrix topology (red).
		(Right) Landscape of the diameter $D$ computed over the dashed blue line of initial conditions for fixed $\phi=0$ and varying $I$ ranging the circulational and librational domains. The red horizontal lines materialise the actions corresponding to the separatrix crossing. At those values, $D$ is discontinue. 
	}
	\label{fig:fig1}
\end{figure}

\subsection{Application to slow chaos of a modulated pendulum}
The $2$-\DOF autonomous version of $\mathcal{J}$, still denoted $\mathcal{J}$,
reads 
\begin{equation}
\mathcal{J}=
\frac{I^2}{2}+\epsilon J
-
(1 + \alpha \cos \tau) \cos \phi.
\end{equation}
As the time variable $\tau$ is slow ($\dot{\tau} =\epsilon$, $\epsilon \ll 1$), this model is paradigmatic of slow chaos where $3$ resonances are $\epsilon$-close \cite{yEl93}. 
In fact, using trigonometrical identities, $\mathcal{J}$ might be written as
\begin{equation}
\mathcal{J}=
\frac{I^2}{2}+\epsilon J-
\big(
\cos \phi
+\frac{\alpha}{2} \cos(\phi-\tau)
+\frac{\alpha}{2} \cos(\phi+\tau)
\big),
\end{equation}
where the $3$ harmonics are clearly apparent, and $\epsilon$ apart. Indeed,  
using Hamilton's canonical equations, one sees that the centres of the resonances 
$\dot{\phi}=0$, $\dot{\phi}-\dot{\tau}$
and $\dot{\phi}+\dot{\tau}$ correspond respectively to the actions values $I=0$, $I=\epsilon$, $I=-\epsilon$. 
Iterations of the stroboscopic mapping computed for the numerical values $\alpha=0.25$, $\epsilon=0.1$,  and obtained by projecting in the $(I,\phi)$ plane snapshots of the flow at times $t$ such that $\epsilon t = 0 \mod 2\pi$, 
are shown in the left panel of Fig.\,\ref{fig:fig2}. 
The $(I,\phi)$ phase space contains predominantly regular curves, and a stochastic layer surrounding the unperturbed separatrix. In fact,   
in the limit $\epsilon \to 0$, the outer and inner boundaries of the chaotic sea can be related to instantaneous separatrices associated to integrable approximations of $\mathcal{J}$ by freezing the time related variable \cite{yEl93}.   
The $D$ landscape, computed for $(J,\phi,\tau)=0$, is similar to the landscape of the pendulum model. 
The most noticeable difference occurs for the range of actions crossing the hyperbolic layer, for which the $D$ metric looses its regularity, as seen in the two-scales plot in the right panel of Fig.\,\ref{fig:fig2}.  

\begin{figure}
	\centering
	\includegraphics[width=0.9\linewidth]{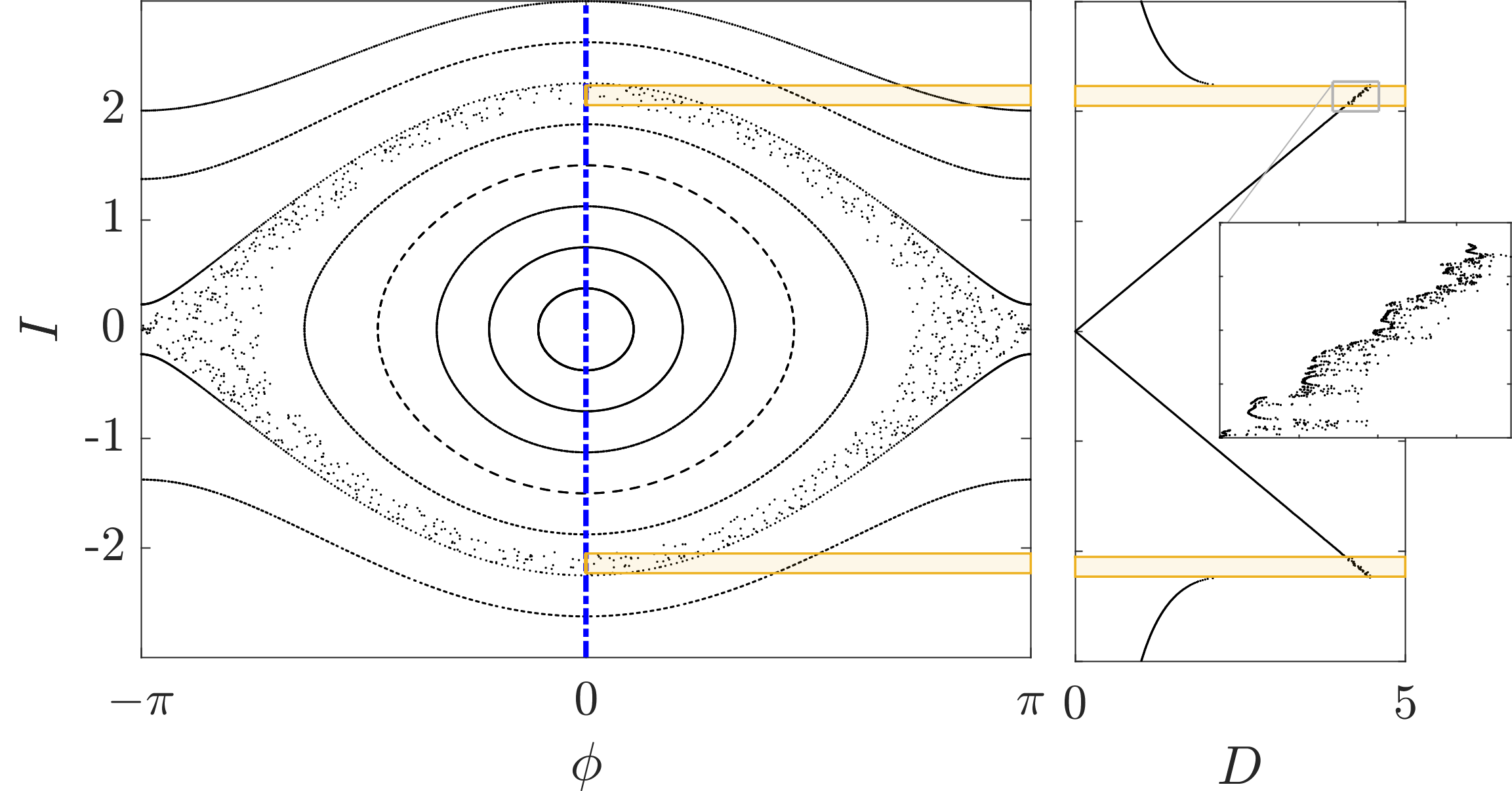}
	\caption{
		(Left)
		Phase space of the modulated pendulum $\mathcal{J}$, using $\alpha=0.25$ and $\epsilon=0.1$, 
		obtained by iterating the period-map.
		(Right) Landscape of the diameter $D$ computed over the dashed blue line of initial conditions.
		The range of the chaotic layer are delineated with the slightly transparent filled areas. It establishes a link between the loss of regularity of $D$ and the crossing of the hyperbolic region. 
		The inlaid panel shows a zoomed in portion of $D$ when crossing the chaotic layer. 
	}
	\label{fig:fig2}
\end{figure}

\subsection{Application to resonance overlap}
The $2$-\DOF autonomous counterpart of $\mathcal{K}$ reads
\begin{equation}
\mathcal{K}
=
\frac{I^{2}}{2}+J-
(
\alpha_{1}
\cos(\phi-\tau)+
\alpha_{2}\cos(\phi+\tau)
).
\end{equation}
When $\alpha_{1}=0$ or $\alpha_{2}=0$, we recover the integrable Hamiltonian of the pendulum using an ad-hoc canonical change of variables.  
The phase space then contains a single cat-eye resonance centred around either $c_{1}=-1$ or $c_{2}=1$, with half-widths $\delta_{2}=2\sqrt{\alpha_{2}}$ or $\delta_{1}=2\sqrt{\alpha_{1}}$ respectively. Whenever both $\alpha_{1}$ and $\alpha_{2}$ are different from zero, the $2$-\DOF Hamiltonian is no longer integrable \cite{dEs81}.  
The resonance overlap  parameter, also called \textit{stochasticity parameter} \cite{bCh79,jMe07}, reads
\begin{equation}
s=\frac{\delta_{1}+\delta_{2}}{\vert c_{2} - c_{1}\vert}=\sqrt{\alpha_1} + \sqrt{\alpha_2}.
\end{equation}
In our numerical setting, we assign to each resonant eye the same dynamical weight with $\alpha_{1}=\alpha_{2}=1/5$, leading to $s\approx 0.89$.  As $s$ is close to $1$, the resonances overlap significantly and macroscopic chaos is expected. 
Iterations of the stroboscopic map, obtained by projecting the flow in the $(I,\phi)$ plane for times $t$ such that $t = 0 \mod 2\pi$, are shown in the left panel of Fig.\,\ref{fig:fig3} for $I \ge 0$. The phase space contains a large connected chaotic sea. The remnants of the librational domains contains several chains of periodic orbits surrounded by thin chaotic layers.  The observation made before on the regularity of the $D$-metric when crossing hyperbolic tangles is made more evident, as illustrated in the right panel of Fig.\,\ref{fig:fig3}. 
The landscape of the $D$ metric contains the characteristic V-shape already observed when crossing elliptical regions, and becomes irregular for actions leading to hyperbolic motions. 

\begin{figure}
	\centering
	\includegraphics[width=0.9\linewidth]{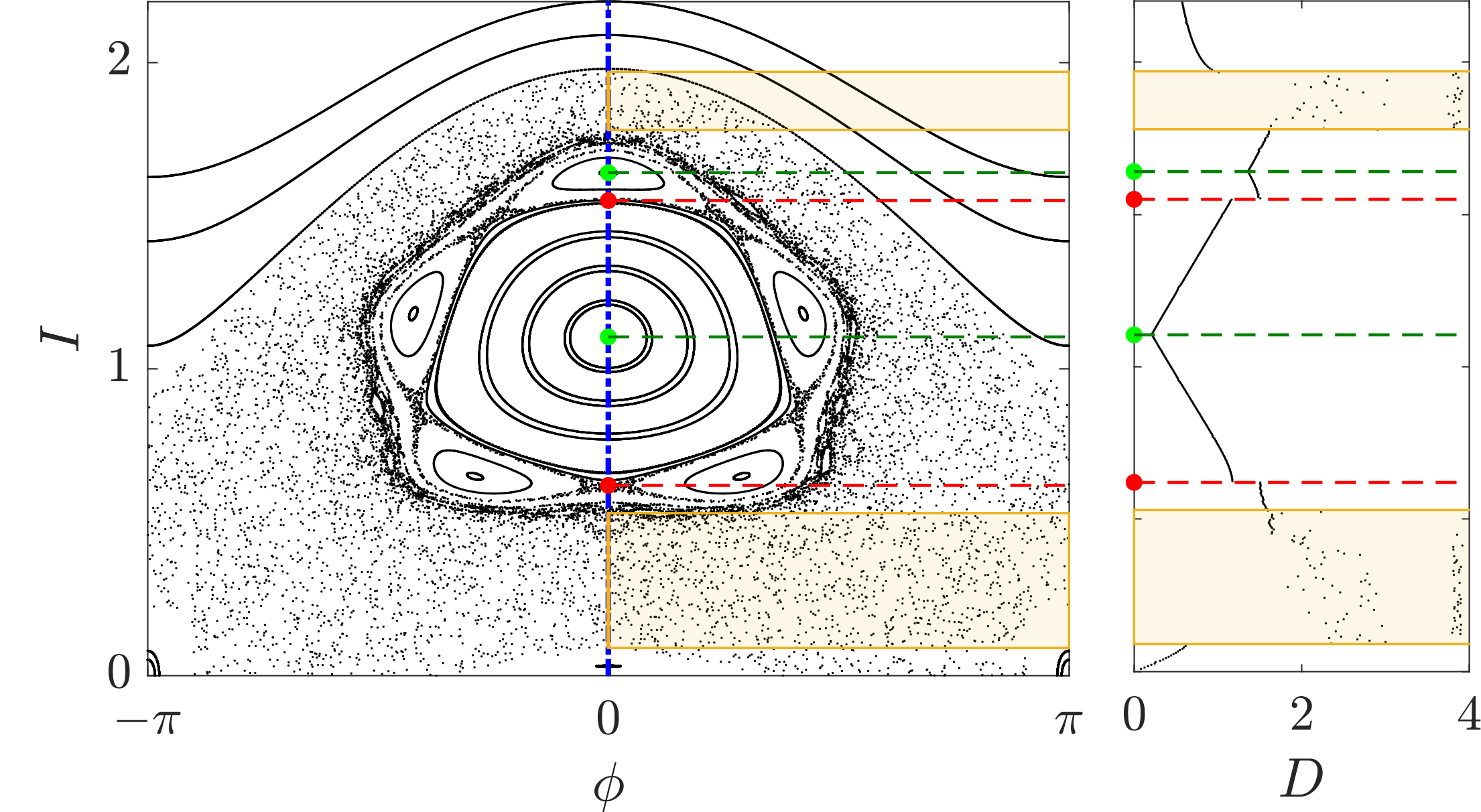}
	\caption{
		(Left)
		Phase space of the two-waves Hamiltonian $\mathcal{K}$, with $\alpha_{1}=\alpha_{2}=1/5$, obtained by iterating the period-map.
		(Right) Landscape of the diameter $D$ computed over the dashed-blue line of initial conditions.
	}
	\label{fig:fig3}
\end{figure}

\section{The $\norm{\Delta D}$ indicator}\label{sec:DeltaD}
\begin{figure}
	\centering
	\includegraphics[width=0.7\linewidth]{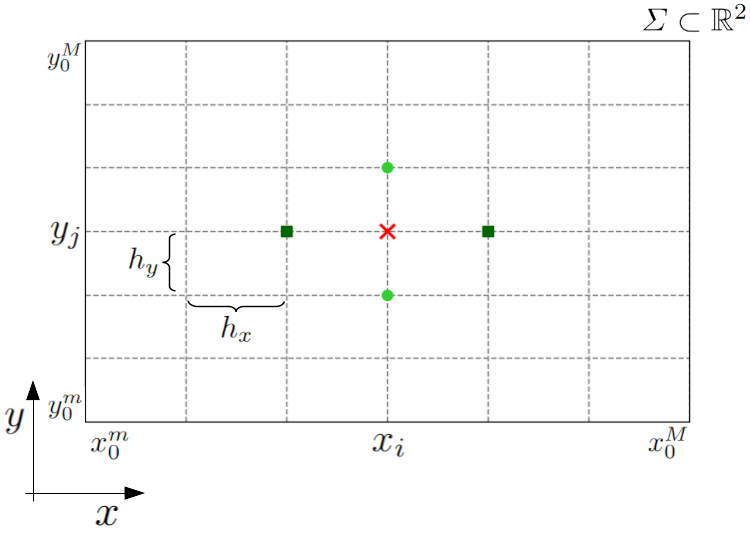}
	\caption{
		Schematic representation of the approach taken to estimate the $\norm{\Delta D}$ index through the second derivatives appearing in Eq.\,(\ref{eq:LapD}) when the computation is performed on a $2$-dimensional section $\Sigma$.
		For a point $(x_{i},y_{j}) \in \mathring{\Sigma}$ (cross), besides the value of the diameter at the point $(x_{i},y_{j})$ itself, 
		the second derivatives are estimated using the $4$ neighbouring values at the points 
		$x_{\pm}=x_{i} \pm h_{x}$ (diamonds) and 
		$y_{\pm}=y_{j} \pm h_{y}$ (circles), accordingly to Eq.\,(\ref{eq:DerivativeFD}). The couple $(h_{x},h_{y})$ define the resolution of the Cartesian mesh obtained as 
		$h_{x}=(x_{0}^{M}-x_{0}^{m})/N_{x}$
		and 
		$h_{y}=(y_{0}^{M}-y_{0}^{m})/N_{y}$. 
		Typical values to produce a resolved map are $N_{x}=N_{y}=500$.   
	}
	\label{fig:fig4}
\end{figure} 
As computationally just observed, the diameter metric encapsulates  signatures of relevant dynamical information, however, the latter are not encoded into the final value of $D$ but rather
in the regularity of the application. This is reminiscent of properties of the $M$-function \cite{aMa13}.
This observation, however, conflicts with the current use of the diameter to visualise dynamical structures, especially in the context of stability maps. Albeit we are not able to provide a general proof, based on the former observations, we conjecture the $D$ metric to be non differentiable when crossing transversally hyperbolic domains. 
This property offers the possibility of delineating sharply hyperbolic domains by quantifying instead the regularity of the application. 
In this respect, following the same strategy of the frequency analysis \cite{jLa93}, 
we find convenient to introduce the second-order derivative based quantity
\begin{equation}\label{eq:LapD}
\norm{\Delta D(I_{0},\phi_{0};t)} = 
\sum_{i=1}^{n} 
\Big\vert
\partial_{I_{0}^{(i)}I_{0}^{(i)}}^{2}
D(I_{0},\phi_{0};t)
\Big\vert,
\end{equation}
where $I_{0}^{(i)}$ denotes the $i$-th component of the initial datum $I_{0}$.\\

\begin{reeemark}
	In the following, the diameters $D$ are estimated numerically using discretised domains of initial conditions.  
	We  compute
	Eq.\,(\ref{eq:LapD}) using central differences. 
	For the sake of simplicity, assume we are evaluating $D$ as a function of $I \in \mathcal{I}=[i_{\min},i_{\max}] \subset \mathbb{R}$. We evaluate 
	$D$ for each action $I_{j}=i_{\min}+jh$, with $h=(i_{\max}-i_{\min})/N$, $N$ being a sufficiently large natural number fixing the resolution of the mesh.  
	The numerical approximation of Eq.\,(\ref{eq:LapD}), for points in $\mathring{\mathcal{I}}$ reads
	\begin{equation}\label{eq:DerivativeFD}
	\frac{\partial^{2}D(I_{j})}{\partial I^{2}}   \simeq \frac{D(I_{j+1})+D(I_{j-1})-2D(I_{j})}{h^{2}}.
	\end{equation}
	This procedure is generalised to higher dimensional domains of initial conditions, \eg for sections $\Sigma \subset \mathbb{R}^{2}$, as schematically represented in Fig.\,\ref{fig:fig4}.\\
\end{reeemark}

Fig.\,\ref{fig:fig5} presents a heatmap of the diameter associated to the pendulum model (a similar computation is performed by \cite{xRa15}), and its associated $\norm{\Delta D}$ landscape computed for varying $I$ and fixed $\phi=0$ (dashed vertical line). 
The points leading to the largest diameters correspond to librational orbits whose energy approaches the energy associated to the separatrix (where $D \sim 4$).  
This leads to a map where the vicinity of the separatrix (within the librational domain) takes approximately the same diameter values. 
Instead, the $\norm{\Delta D}$ landscape reacts as sharp Dirac impulses for the actions corresponding to the separatrix ($I=\pm 2$). The values taken by 
$\norm{\Delta D}$ 
are in fact several orders of magnitude larger than the values associated to the diameters of others librational or circulational orbits. 
The only exception appears in the vicinity of the stable equilibria at $I=0$, where a Dirac 
pic can also be distinguished, with a lower amplitude though.
This observation is consistent with the $V$-shape of the diameter near the stable equilibria (confer Fig.\,\ref{fig:fig1}) and  the analytical estimates presented in Appendix \ref{app:AnalProof}. \\

The next section  demonstrates further the benefits of supplementing the $D$ analysis with the  $\norm{\Delta D}$ index in a practical planetary context.

\begin{figure}
	\centering
	\includegraphics[width=0.85\linewidth]{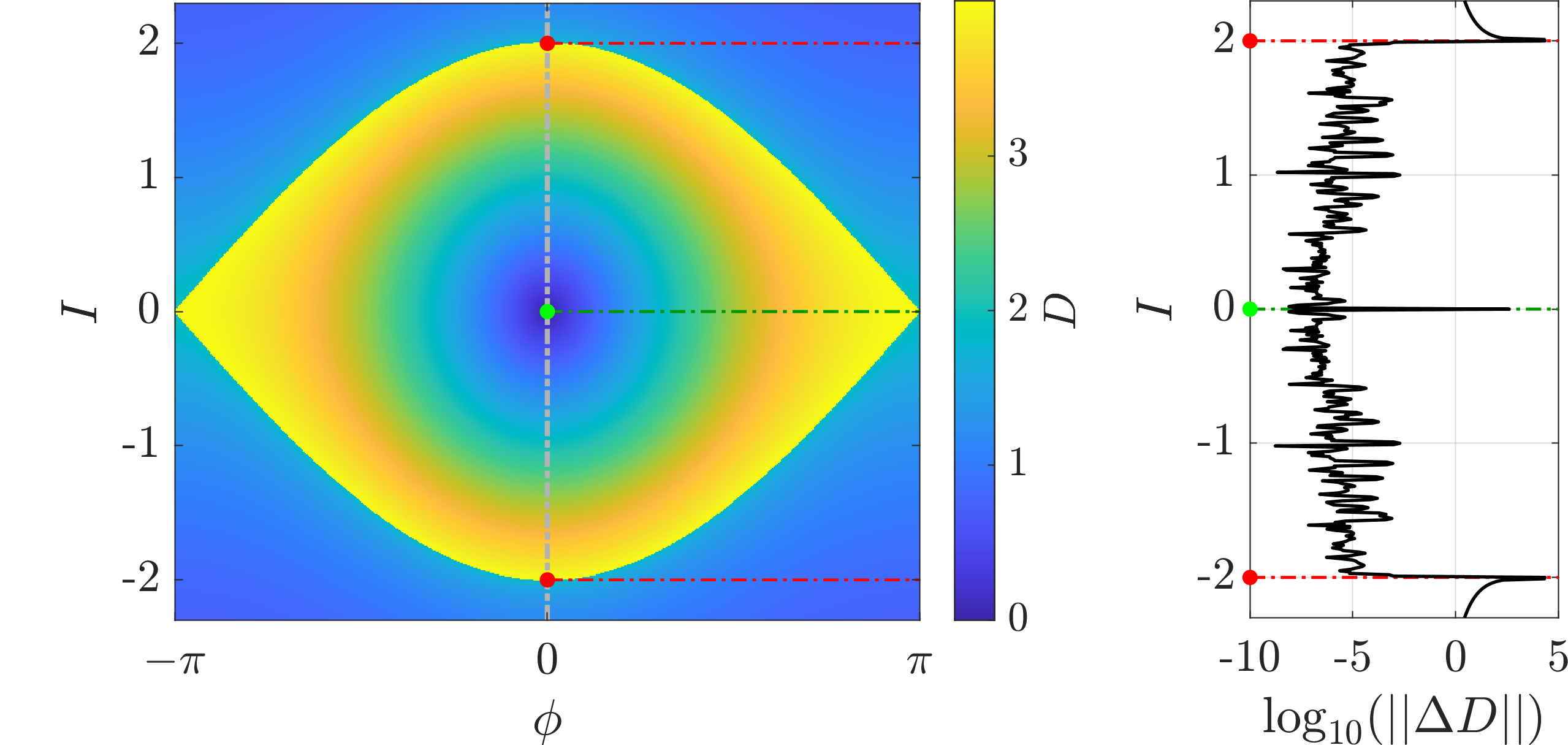}
	\caption{
		Diameter stability map associated to the pendulum model and its corresponding $\norm{\Delta D}$ landscape computed along the line $\phi=0$ (black dashed black line).
		The $\norm{\Delta D}$ metric allows to sharply detect the separatrix crossing, contrarily to the diameter which take similar values in its whole  neighbourhood (with $D \simeq 4$). The dashed red lines at $I=\pm 2$ and the dashed green line at $I=0$ in the landscape materialise the locations of the hyperbolic and stable equilibria respectively, where $\norm{\Delta D}$ now reacts as net impulses.
	}
	\label{fig:fig5}
\end{figure}

\section{Application to planetary problems}\label{sec:PlanetaryApplication}
We now return to the early roots of the MEM grounded in gravitational problems where we apply our apparatus into an $N$-body framework, namely the 2 and the 3 planet cases.
This section demonstrates the ability of the $\norm{\Delta D}$ indicator to reveal separatrices and thin details of resonant webs in the context of dynamical maps. 

\subsection{Generalities and numerical setups}
Our dynamical system consists in $N_{\rm pl}$ coplanar planets with masses $m_{i}$, $i=1,\dots,N_{\rm pl}$, 
orbiting a central star with mass $M_\star = 1 \, {\rm M}_\odot$ where $M_\star \gg m_i$. 
We study the orbital evolution of the planets with a set of orbital elements $\oe_i$ as defined in the introduction, with the sub-index referring to planet $i$ in the system. 
In terms of the modified Delaunay canonical variables \cite{jLa91}, 
\begin{align}
\left\{
\begin{aligned}
& L_i=m_i \sqrt{\mu_i a_i}, \hspace{1.7cm}\quad \lambda_i=M_{i} + \omega_{i}, \\
& S_i = L_i \left(1-\sqrt{1-e_i^2}\right), \quad -\varpi_i=-\omega_{i}, 
\end{aligned}
\right.
\end{align}
the system is described by the Hamiltonian function
\begin{equation}\label{eq:ham}
{\mathcal H} = - \sum_{i = 1}^{N_{\rm pl}} \frac{\mu_i m_i^{3}}{2L_i^2} + 
\sum_{\bf k } c_{\bf k }(\bold{L},\bold{S}) \cos \left( \bold{k} \cdot \boldsymbol{\phi} \right)
\end{equation}
where $\mu_i = {\mathcal G} (M_\star + m_i)$, ${\mathcal G}$ denotes the gravitational constant and  $(\bold{L},\bold{S})$ denotes the vectors whose components 
are $(L_{i},S_{i})$, similarly to $\boldsymbol{\phi}=(\boldsymbol{\lambda},-\boldsymbol{\varpi})$, and $\bold{k} \in \mathbb{Z}_\star^{2N_{\rm pl}}$. 
The first term in Eq.~\eqref{eq:ham} is the integrable part that refers to the unperturbed Keplerian motion of the planets around the central star. The second term is the perturbing function which accounts for the interactions between the planets. The $c_{\bold{k}}$ coefficients in the perturbing function depend on the Laplace coefficients and can be computed from the expressions given in \cite{cMu99}. Our system is thus a $2 N_{\rm pl}$-\DOF problem. \\

In a general manner, a 3-planet mean-motion resonance (MMR) can be expressed in planetary orbital parameters as
\begin{equation}
k_1 n_1 + k_2 n_2 + k_3 n_3 = 0,
\label{eq:res0}
\end{equation}
where $n_i$ is the mean motion of planet $i$, and $
k=(k_{1},k_{2},k_{3}) \in \mathbb{Z}^{3}_{\star}$. 
The coefficient $s = |k_{1} + k_{2} + k_{3}|$ is the order of the resonance. The order of a 2-body MMR is known to be the order at which the eccentricity of the bodies appear in the coefficients of the expansion of the perturbing function. In the case of a 3-body resonance the property still holds, and thus we can separate 3-body resonances into zeroth order ($q = 0$), and non-zero order ($q \ne 0$). 
The particular case of the 2-planets resonant dynamics is straightforward to describe from the previous expression. A resonance between planets $m_1$ and $m_2$ can be described from Eq.\,(\ref{eq:res0}) when $k_3=0$, a resonance between planets $m_2$ and $m_3$, corresponds to $k_1=0$, while the last possible 2-planet MMR is between the non adjacent planets, when $k_2=0$.
In the following, we perform $N$-body simulations using the code as described by \cite{cBe12}. As a general description, for each case we choose a representative plane, and generate a grid of initial conditions that we integrate for a fixed amount of time. In both scenarios, all angles are set to zero. 


\subsection{The 2-planet case}
Although the pendulum model described in Sect.\,\ref{subsec:pend} allows for analytical estimations of many of the properties of the MMRs, the Second Fundamental Model of resonance of \cite{jHe83} is more adequate to reproduce the dynamical features in the 2-planet case for planets in circular orbits. 
Recalling that two planets are in a $(p+q)/p$ MMR with $p,q \in \mathbb{Z}_{\star}$ if the mean motions $n_1$ and $n_2$ of the planets satisfy $(p+q)n_2 - p n_1\sim 0$ and at least one of the associated resonant angles $\phi_{1,2} = (p+q)\lambda_2 -p \lambda_1 - q\varpi_{1,2}$ librates around a fixed value, 
the resonant dynamics of two planets on eccentric orbits in a MMR of arbitrary order $q$ can be reduced to a $1$-DoF Hamiltonian and described with a pendulum like structure (see \eg \cite{wSe84,sHa19}). 
Here we will not write explicitly this procedure, but let us mention that it is possible to derive analytically the location of the fixed points and the separatrix. 
Instead, we perform brute-force $N$-body simulations to estimate the main features of the system. 
MMR affects primarily the semi-major axis observables, and we thus rely on estimating $\delta a$, such that Eq.\,(\ref{eq:Dvalue}) becomes
\begin{equation}
\delta a = 
\max\big(
\delta a_{1}, 
\delta a_{2}
\big).
\end{equation}
Focusing on the vicinity of the first-order $n_{1}/n_{2} = 2/1$ MMR, Fig.\, \ref{fig:fig6} shows composite results of both the $\delta a$ and $\norm{\Delta\delta a}$
analyses in the representative $(a_1/a_2, e_1)$ plane, constructed over a grid of $500 \times 500$ initial conditions integrated for $5\times 10^3$ years. The planetary masses are chosen equal to $0.05$ and $0.1$ masses of Jupiter, ${\rm m_{Jup}}$.

\begin{figure}
	\centering
	\includegraphics[width=\textwidth]{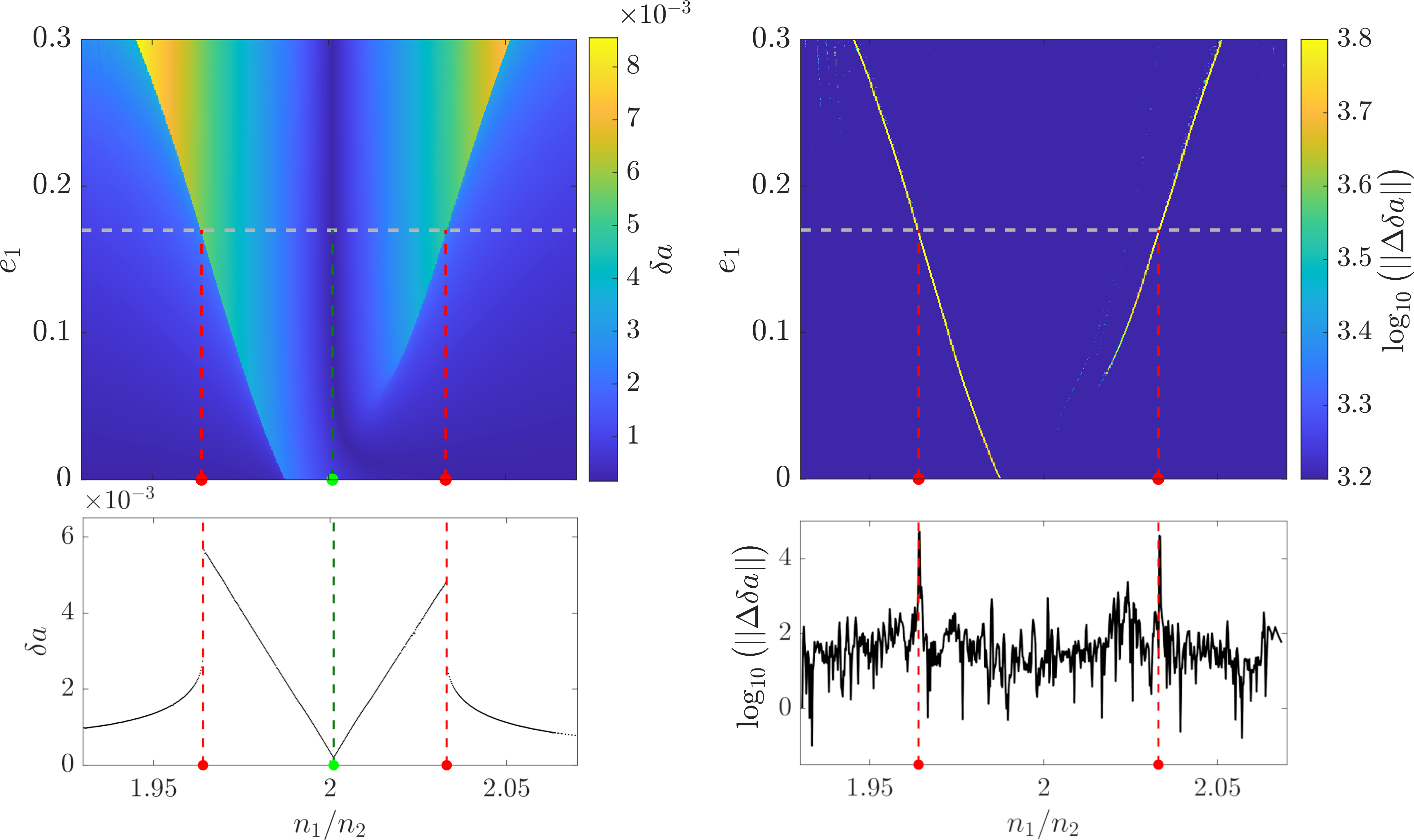}
	\caption{(Top left) Dynamical map using  $\delta a$  for the 2/1 MMR in the representative plane $(n_1/n_2, e_1)$ of initial conditions. Planetary masses are $m_1 = 0.05 \, {\rm m_{Jup}}$ and $m_2 = 0.1 \, {\rm m_{Jup}}$ orbiting a central star with mass $M_\star = 1 \, {\rm M}_\odot$. 
		Angular variables are initially chosen equal to zero. The pericentric branch of zero-amplitude solutions appears as dark blue, while the regions of maximum variation of the semi-major axis appear in light colours. Total integration time is $10^4 \, {\rm yr}$. (Bottom left) Landscape of the $N$-body integration of  $\delta a$ computed over the dashed line of initial conditions at fixed eccentricity $e_1 = 0.17$. 
		(Top right) Corresponding $\norm{ \Delta \delta a }$ map which clearly identifies the separatrices. 
		(Bottom right) Corresponding landscape of $\norm{\Delta \delta a}$.}
	\label{fig:fig6}
\end{figure}

The top left panel of Fig.\,\ref{fig:fig6}, which reproduces Fig.\,2 of \cite{xRa17}, identifies the characteristic V-shape of the resonant structure of the 2-body problem, although the diameter takes similar values in the neighbourhood of the separatrix (similarly to Fig.~\ref{fig:fig5}).
The benefits of considering $\norm{\Delta \delta a}$ over $\delta a$ are made evident in the dynamical map of the right-hand panel. The V-shape of the separatrix is now  undoubtedly identifiable. Each dynamical map comes with its corresponding landscape computed over the dashed line of initial conditions after varying $n_{1}/n_{2}$ but frozen $e=0.17$.  
The obtained $\delta a$ landscape is analogue to the one obtained for the integrable pendulum (recall  Fig.\,\ref{fig:fig1}) and contains $3$ points where $\delta a$ looses its regularity. 
Two of them correspond to the separatrix crossing (red vertical lines), and the latter corresponds to the crossing of the pericentric branch (green vertical lines at $n_{1}/n_{2} \sim 2$). 
Whilst $\delta a $ becomes singular at this point, 
the numerical values of 
$\norm{\Delta \delta a}$ does not permit to distinguish it sharply. In fact, the $\norm{\Delta \delta a}$ landscape contains Dirac pics only 
for the period ratio corresponding to the separatrix crossing. 
Consequently, the pericentric branch (or family of stable solutions) is not identifiable.

\subsection{The 3-planet case}
We now turn our attention to the 3-planet dynamics. The dynamics of these resonant system is governed by the Second Fundamental Model of Resonance. Specific 3-body systems were studied by different means.  Let us only mention some of those like the asteroids in the Solar System \cite{dNe98} and exoplanetary systems like TRAPPIST-1 \cite{mGi17,rLu17} or TOI-178 \cite{aLe21}.
A model for zeroth-order 3-body resonances was provided by \cite{aQu11} who also derived a resonance overlap criterion. More recently, \cite{aPe21} generalised the result and has proposed an integrable model for first-order MMRs ($s=1$). From a numerical perspective, the detailed analysis of the resonant structure has been provided in both \cite{cCh18} and \cite{aPe21}.

We retake here the route of \cite{cCh18} and discuss the resonant template through $\delta a$ computations. We consider a system of three equal mass planets with $m_i = 30$ Earth masses, orbiting a Sun-like star. We adopt the $(n_1/n_2,n_2/n_3)$ representative plane and estimate $\delta a$ and its Laplacian $\norm{\Delta \delta a}$ over a grid of $500 \times 500$ initial conditions for $10^4$ years (or equivalently, $10^4$ orbits of the outer body since it is fixed at $a_3 = 1 \, {\rm au}$).  
The main features observed in the $\delta a$ map shown in the to panel of Fig.\,\ref{fig:fig7} have already been described by \cite{cCh18,aPe21}. Briefly, the main vertical stripes represent interactions between inner and middle planet, while horizontal lines show the MMRs between middle and outer bodies, and the diagonal curves with negative slope represent commensurabilities between innermost and outermost planets. The diagonal like lines (with positive slopes) are 3-planet zero- and first-order commensurabilities. The connected region of orbits with large $\delta a$ correspond to either to collisions or escapes.  
The comparison between the $\delta a$ and the $\norm{\Delta \delta a}$ map (bottom panel of Fig.\,\ref{fig:fig7}) highlights the striking advantages in considering the $\norm{\Delta \delta a}$ indicator. 
Besides the main structures detected by the $\delta a$ index, the map contains much more details on secondary structures, and is able to reveal sharply the complex topology of the interacting resonances.  \\

\begin{reeemark}
	We underline that the level of details
	offered by the $\norm{\Delta\delta a}$ map is similar to  the analysis performed using the well-established \texttt{MEGNO} chaos indicator \cite{pCi03}, confer Fig.\,2 of \cite{aPe21}.\\
\end{reeemark}

\begin{reeemark}
	The apparent ``flatness'' of the $\delta a$ map in Fig.\,\ref{fig:fig7} was bypassed by \cite{cCh18} using an analytical procedure, aiming at removing short-period oscillation terms of the semi-major axis (equivalently, the Delaunay variables $L$), by isolating and recognising the purely periodic components of the disturbing functions. In doing so, the long-term features recognisable in the map were enhanced. 
	Here, the numerical $\norm{\Delta \delta a}$ seems to outperform the analytical filtering.  
	\\
\end{reeemark}

\begin{figure}
	\centering
	\includegraphics[width=0.9\textwidth]{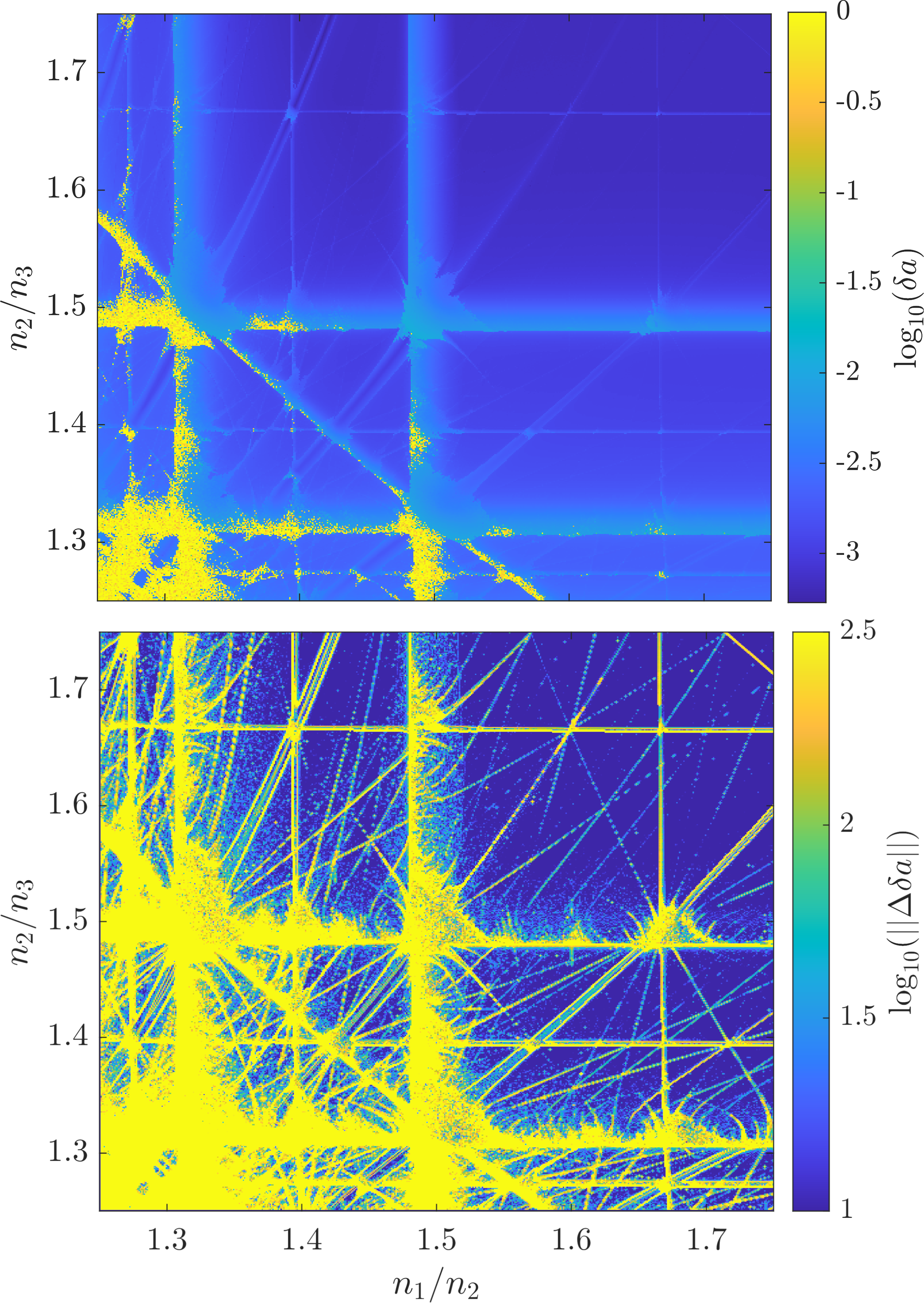}
	\caption{ Dynamical maps for the 3-planet case in the $(n_1/n_2,n_2/n_3)$ plane with a central star with mass $1 \, {\rm M}_\star$ and three planets with equal masses $m = 30 \, {\rm m_\oplus}$. The grid consists in $500 \times 500$ initial conditions propagated over $10^4$ periods of the outer planet (placed at $a_3 = 1 \, {\rm au}$).
		The $\delta a$ map (top) underestimate the overall resonant and chaotic architecture, and fails in 
		recognising thinner secondary structures as they appear with the $\norm{\Delta \delta a}$ analysis (bottom).} 
	\label{fig:fig7}
\end{figure}

\section{Conclusions}\label{sec:conclusions}
Dynamics plays a pivotal role in a wide range of scientific and engineering efforts. In planetary sciences, it covers evolution of Solar System's minor bodies, exoplanetary systems, and ultimately conditions for habitability. Determining the architecture of multi-planetary systems is one of the cornerstones for understanding planet formation and evolution of extrasolar systems as well as our own. The characterisation of extrasolar planets via their dynamics further supplies us with lots of clues hidden in the formation process. 
Having efficient methods to detect and visualise resonant structures 
is a key advance in the field. 
This work has extented and complemented the MEM capabilities by introducing a scalar value  inflating hyperbolic structures from their computations. The most important contributions of our work are summarised in the following: 
\begin{enumerate}
	\item Celestial mechanicians and astrodynamicists have been computing Lagrangian Descriptors like quantities for almost $2$ decades. We have established an analogy between the $M$-function, commonly employed in ocean and reaction dynamics, with diameters like quantities such as the MEM employed in gravitational dynamics. 
	\item We have introduced a non-variational  dynamical indicator from MEM like computations.
	The index, complementing further the MEM information, allows to enhance the visualisation of global structures, and improves MEM maps which tend to be ``flat.'' 
	The key point relies in quantifying the regularity of the diameter metric.   
	\item We applied this new indicator to low-dimensional toy models, allowing to clearly identify separatrices and chaotic seas stemming from stable-unstable manifolds. 
	\item We have presented numerical evidences on the concrete applications 
	and relevance 
	of the method to planetary problems, in the context of mean-motion resonances of the 2 and 3 planets problem.  
	We highlighted the benefits of the tool through dynamical maps, revealing secondary structures otherwise undetected using the MEM. 
\end{enumerate}

\appendix

\section{Application to a discrete case}\label{app:Discrete}
The framework presented applies also for nearly-integrable discrete systems. 
For illustrative purpose, following \cite{nGu17}, 
let us consider the $4$-dimensional mapping on $\mathbb{T}^{2} \times \mathbb{R}^{2}$ reading

\begin{align}\label{eq:DefMapping}
\left\{
\begin{aligned}
&x_{1}'=x_{1}+y_{1}, \\
&x_{2}'=x_{2}+y_{2}, \\
&y_{1}'=y_{1}-\frac{1}{2\pi}\big(
a\sin(2\pi x_{1})+c \sin(2\pi(x_{1}+x_{2}))
\big), \\
&y_{2}'=y_{2}-\frac{1}{2\pi}\big(
b\sin(2\pi x_{2})+c \sin(2\pi(x_{1}+x_{2}))
\big), 
\end{aligned}
\right.
\end{align}
where $a, b, c$ are real parameters. 
When $c=0$, the mapping is a product of two uncoupled  standard-maps. 
In the following, we consider $a=0.1$, $b=0.1$ and $c=0.07$, and we generate $500 \times 500$ initial conditions distributed in the $(y_{1},y_{2})$ action plane $[-0.25,0.65]^{2}$ (fixing $x_{1}=x_{2}=0$) iterated up to the final time $T=10^{3}$. The numerical setting  follows closely \cite{nGu17},
who dealt  with fast Lyapunov indicators \cite{cFr97}. 
In Fig.~\ref{figApp},  we show alternatively the results of the $D$ and  $\norm{\Delta D}$ analysis to provide a global representation of the phase space. Although the diameter reveal the low order resonant strips, it does not provide clear insights about the geography of lower order resonances, and the distribution of chaotic motions near the resonant crossings. This ``flatness'' in the map is reinflated by the $\norm{\Delta D}$ index, which reinvigorates minute details of the geography of low-order resonant structures.

\begin{figure}
	\centering
	\includegraphics[width=1\textwidth]{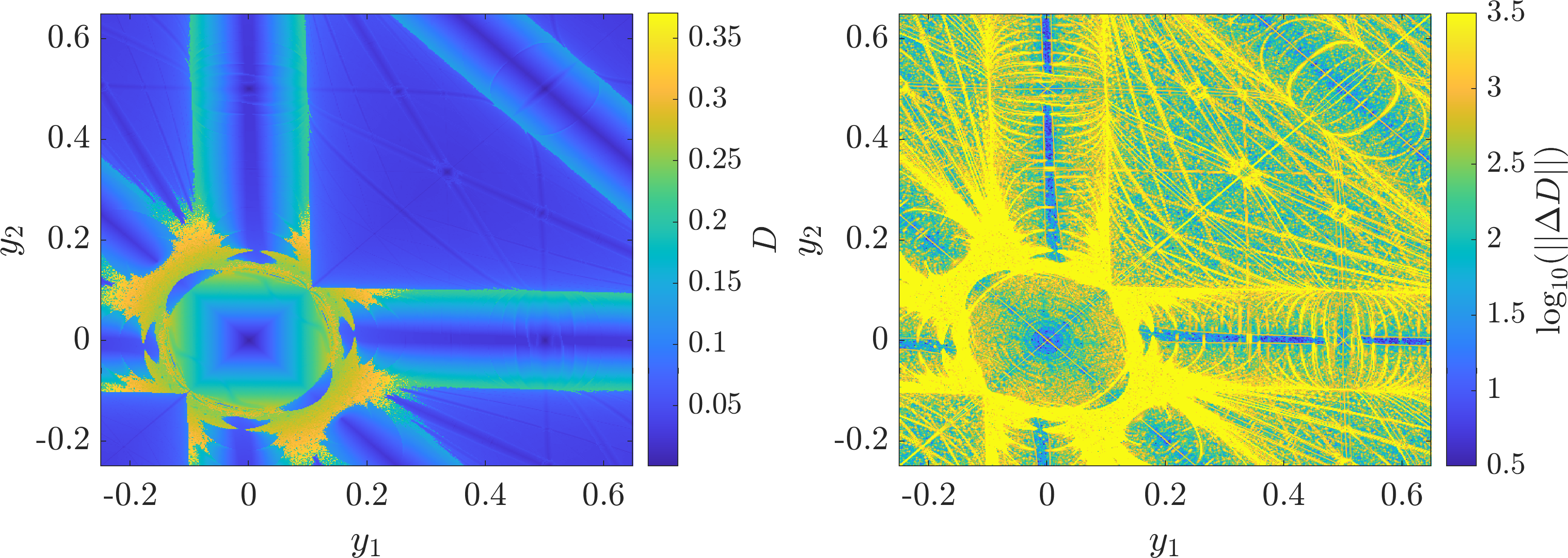}
	\caption{
		Dynamical maps associated to Eq.\,(\ref{eq:DefMapping}) with $a=0.1$, $b=0.1$, $c=0.07$.  Besides revealing the main strips, the $\norm{\Delta D}$ index (right-hand side) also highlights smaller resonant strips and chaos in the vicinity of their crossings, otherwise undetected with the diameter metric (left-hand side). 
	}
	\label{figApp}
\end{figure}

\section{Analytical properties of the diameter for the pendulum model}\label{app:AnalProof}
The non-differentiability of the diameter near the stable equilibrium and its discontinuity when crossing transversally the separatrix of the pendulum (as observed numerically in Fig.\,\ref{fig:fig1}) are proven analytically. 

\begin{prop}[Diameter in elliptic region.]\label{prop:V}
	Let $\mathcal{H}(p,q)=\frac{p^{2}}{2}+\frac{q^{2}}{2}$ be
	the Hamiltonian of the linear oscillator, $(p,q) \in D \subset \mathbb{R}^{2}$.
	Then we have
	\begin{equation}
	D(p_{0},q_{0})
	=
	2\sqrt{p_{0}^{2}+q_{0}^{2}}
	.
	\end{equation}
\end{prop}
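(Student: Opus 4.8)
The plan is to exploit the fact that for the linear oscillator the orbit through $(p_0,q_0)$ is exactly a circle in the $(p,q)$ plane, so the range of each coordinate over a full period is trivial to read off. First I would note that the Hamiltonian $\mathcal{H}(p,q)=\tfrac{p^2}{2}+\tfrac{q^2}{2}$ generates the linear flow $\dot q = p$, $\dot p = -q$, whose solution with initial datum $(p_0,q_0)$ is $q(\tau) = q_0\cos\tau + p_0\sin\tau$ and $p(\tau)=p_0\cos\tau - q_0\sin\tau$. Hence $p(\tau)^2 + q(\tau)^2 = p_0^2+q_0^2$ for all $\tau$: the orbit is the circle of radius $r_0 := \sqrt{p_0^2+q_0^2}$ centred at the origin.

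Next I would compute the amplitude of the action-like coordinate that enters the diameter. Writing $q(\tau) = r_0\sin(\tau+\varphi_0)$ for a suitable phase $\varphi_0$ (and similarly $p(\tau)=r_0\cos(\tau+\varphi_0)$ up to sign), one sees that as $\tau$ ranges over $[0,t]$ with $t$ at least a full period $2\pi$ — which is the relevant regime for the time window $\mathcal T$ used in the paper — the coordinate $q$ attains every value in $[-r_0, r_0]$. Therefore
\begin{equation}
\delta q(p_0,q_0;t) = \max_{0\le\tau\le t} q(\tau) - \min_{0\le\tau\le t} q(\tau) = r_0 - (-r_0) = 2r_0,
\end{equation}
and by the same argument $\delta p(p_0,q_0;t) = 2r_0$. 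Plugging into the definition \eqref{eq:Dvalue} of the diameter, $D = \norm{(\delta p, \delta q)}_\infty = \max(2r_0, 2r_0) = 2r_0 = 2\sqrt{p_0^2+q_0^2}$, which is the claimed identity. (If one insists on the one-dimensional version monitoring a single action, the conclusion is identical since $\delta q = \delta p = 2r_0$.)

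The only genuine subtlety — and the step I would flag as the main obstacle — is the requirement that the time window be long enough for the orbit to sweep out its full oscillation range; for $t < 2\pi$ the extrema are attained only at interior critical points or at the endpoints $\tau=0,t$, and the formula would pick up a dependence on $t$ and on the phase $\varphi_0$. Since the paper fixes $\mathcal T = [0,500] \gg 2\pi$ and, more to the point, Proposition~\ref{prop:V} is stated with no $t$-dependence on either side, I would simply make explicit the standing assumption that $t \ge 2\pi$ (one full period), under which the computation above is exact and phase-independent. Everything else is the elementary trigonometric bookkeeping sketched above.
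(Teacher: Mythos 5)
Your proposal is correct and follows essentially the same route as the paper: both identify the orbit through $(p_{0},q_{0})$ as the circle of radius $r_{0}=\sqrt{p_{0}^{2}+q_{0}^{2}}$ and read off the diameter as $2r_{0}$. The time-window subtlety you flag is handled in the paper by parameterising orbits with their energy levels and assuming an infinitely large time window, which is equivalent to your explicit standing assumption $t \ge 2\pi$.
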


\begin{proof}
	The system is $1$-\DOF and integrable. Following \cite{rPO21}, we parameterise orbits with their energy levels, thus accounting for an infinitely large time-window. 
	The flow generates circles around the origin with radii
	\begin{equation}
	r(p_{0},q_{0})=\sqrt{p_{0}^{2}+q_{0}^{2}}.
	\end{equation}
	The diameter thus reads 
	\begin{equation}
	D(p_{0},q_{0})=2r(p_{0},q_{0}).
	\end{equation}
	Along the line of initial condition $q_{0}=0$, one get
	\begin{equation}
	D(p_{0},0)=2\sqrt{p_{0}^{2}}=2\vert p_{0}\vert,
	\end{equation}
	and in particular $D$ is not differentiable at $p_{0}=0$.  
\end{proof}

\begin{prop}[Discontinuity when crossing the separatrix]
	Let $\mathcal{H}(p,q)=\frac{p^{2}}{2}-\cos q$ be the Hamiltonian of the pendulum, $(p,q) \in D \times [-\pi,\pi]$, $D \subset \mathbb{R}$. The diameter is discontinuous on the energy level labelling the separatrix.   
\end{prop}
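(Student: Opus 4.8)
The plan is to parameterise orbits of the pendulum by their energy level $E=\mathcal{H}(p,q)$, exactly as done for the linear oscillator in the proof of Proposition \ref{prop:V}, and to compute the diameter $D$ — understood as the amplitude of the action-like variable $p$ over a full orbit — as an explicit function of $E$. The key point is that the map $E \mapsto D(E)$ jumps at the separatrix value $E=1$. First I would treat the librational regime $-1 < E < 1$: here the orbit is a closed curve in the cat-eye, symmetric about $p=0$, and the maximal $|p|$ on the orbit is attained at $q=0$, giving $p_{\max}^2/2 - 1 = E$, hence $D_{\mathrm{lib}}(E) = 2\sqrt{2(E+1)}$. As $E \to 1^{-}$ this tends to $2\sqrt{2\cdot 2} = 4 = \Delta I$, consistent with the numerics in Fig.\,\ref{fig:fig1}. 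Second, I would treat the circulational regime $E > 1$: the orbit is a graph $p = \pm\sqrt{2(E+\cos q)}$ over $q \in [-\pi,\pi]$ (taking, say, the $p>0$ branch for a trajectory starting with $p_0>0$), so $p_{\max}$ occurs at $q=0$ and $p_{\min}$ at $q=\pm\pi$; thus $D_{\mathrm{circ}}(E) = \sqrt{2(E+1)} - \sqrt{2(E-1)}$. As $E \to 1^{+}$ this tends to $\sqrt{2\cdot 2} - 0 = 2$.

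Comparing the two one-sided limits at $E=1$ gives $\lim_{E\to 1^-} D(E) = 4 \ne 2 = \lim_{E\to 1^+} D(E)$, which establishes the discontinuity of $D$ as a function of the energy across the separatrix level. To phrase this as a statement about the diameter along a transversal line of initial conditions $\phi_0 = 0$, $I=p_0$ varying (as in Fig.\,\ref{fig:fig1}), I would note that $E = \mathcal{H}(0,p_0) = p_0^2/2 - 1$ is a continuous increasing function of $p_0$ for $p_0 \ge 0$ with $E=1$ corresponding to $p_0 = 2$; substituting, one finds $D(p_0) = 2p_0$ for $0 \le p_0 \le 2$ (so $D(2^-)=4$, and in fact the supremum is attained, $D(2)=4$) and $D(p_0) = \sqrt{p_0^2} - \sqrt{p_0^2 - 4} = p_0 - \sqrt{p_0^2-4}$ for $p_0 > 2$, so $D(2^+) = 2$. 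Hence $D$ is left-continuous but not right-continuous at $p_0 = 2$, i.e.\ discontinuous on the separatrix, as claimed; the symmetric statement for $p_0 < 0$ follows from the reflection $p \mapsto -p$.

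The main obstacle is not the computation but the bookkeeping about \emph{which} orbit a given initial datum lies on and over what time window the max/min are taken. Strictly, $D$ as defined in Eq.\,(\ref{eq:DD}) uses a finite window $[0,t]$; the clean energy-parameterised formulas above describe the $t \to \infty$ limit (full period traversed), so I would either state the result in that asymptotic sense, following the convention adopted for Proposition \ref{prop:V} via \cite{rPO21}, or note that for librational/circulational orbits the period is finite and bounded on compact energy ranges bounded away from $E=1$, so the finite-window diameter equals the limiting one once $t$ exceeds the period. A secondary subtlety is the separatrix orbit itself ($E=1$, $p_0=2$ at $q_0=0$): it is a homoclinic orbit with $p$ ranging over $(0,2]$, asymptotic to $p=0$, so its diameter is $2$, not $4$ — the value $4$ is only attained in the limit from inside the cat-eye by librational orbits whose turning point in $q$ approaches $\pm\pi$. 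This is precisely the source of the one-sided (left-)continuity, and I would make sure the statement and proof reflect that the jump is genuinely a discontinuity of the $D$-versus-energy graph and not merely a failure of differentiability.
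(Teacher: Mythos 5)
Your proof follows essentially the same route as the paper's: parameterise orbits by their energy level and compare the explicit librational diameter $2\sqrt{2(E+1)}$ with the circulational one $\sqrt{2(E+1)}-\sqrt{2(E-1)}$, whose one-sided limits at $E=1$ ($4$ versus $2$) disagree. The only blemish is internal to your added discussion: you assert both that $D(2)=4$ is attained and that the separatrix orbit's diameter is $2$; the paper sidesteps this by leaving the value at $E_{0}=1$ unspecified, and the claimed discontinuity follows from the mismatched one-sided limits either way.
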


\begin{proof}
	The librational domain corresponds to the range of energy $E \in [-1,1)$, the circulational domain to $E >1$, and the separatrix has energy $E=1$. Let $E_{0}$ denote the initial energy associated to $(p_{0},q_{0})$.  
	The diameter reads
	\begin{align}
	D(p_{0},q_{0})=
	\left\{
	\begin{aligned}
	& 2 \sqrt{2(E_{0}+1)}, \, E_{0} \in [-1,1), \\
	& \sqrt{2(E_{0}+1)} -  \sqrt{2(E_{0}-1)}, \, E_{0} > 1, 
	\end{aligned}  
	\right.  
	\end{align}
	from which follows the discontinuity announced at $E_{0}=1$. 
\end{proof}




%

\section*{Acknowledgments}
	J.\,D. is a postdoctoral researcher of the ``Fonds de la Recherche Scientifique'' - FNRS.
	C.C. acknowledges FNRS Grant No. F.4523.20 (DYNAMITE MIS-project). 
	It is our pleasure to acknowledge feedback and discussions with Ana Maria Mancho, Elisa Maria Alessi and Timoteo Carletti.

\bibliographystyle{plain}   
\bibliography{biblio}

\end{document}